\newtheorem{theorem}{Theorem}[section]
\newcommand{\bes}{\begin{displaymath}}
\newcommand{\ees}{\end{displaymath}}
\newcommand{\be}{\begin{equation}}
\newcommand{\ee}{\end{equation}}
\newcommand{\ba}{\begin{eqnarray}}
\newcommand{\ea}{\end{eqnarray}}
\newcommand{\bas}{\begin{eqnarray*}}
\newcommand{\eas}{\end{eqnarray*}}
\newcommand{\@Bbb}[1]{\ensuremath{\Bbb #1}}
\newcommand{\B}{{\@Bbb B}}
\newcommand{\E}{{\@Bbb E}}
\newcommand{\F}{{\@Bbb F}}
\renewcommand{\P}{{\@Bbb P}}
\newcommand{\Q}{{\@Bbb Q}}
\newcommand{\bQ}{{\@Bbb Q}}
\newcommand{\N}{{\@Bbb N}}
\newcommand{\R}{{\@Bbb R}}
\newcommand{\T}{{\@Bbb T}}
\newcommand{\bbR}{{\@Bbb R}}
\newcommand{\W}{{\@Bbb W}}
\newcommand{\Z}{{\@Bbb Z}}
\newcommand{\bbZ}{{\@Bbb Z}}
\newcommand{\@s}[1]{\ensuremath{\mathcal #1}}
\newcommand{\cA}{\@s A}
\newcommand{\cB}{\@s B}
\newcommand{\cC}{\@s C}
\newcommand{\cD}{\@s D}
\newcommand{\cE}{\@s E}
\newcommand{\cF}{\@s F}
\newcommand{\cG}{\@s G}
\newcommand{\cH}{\@s H}
\newcommand{\cI}{\@s I}
\newcommand{\cJ}{\@s J}
\newcommand{\cal}{\mathcal}
\newcommand{\cK}{\@s K}
\newcommand{\cL}{\@s L}
\newcommand{\cN}{\@s N}
\newcommand{\cM}{\@s M}
\newcommand{\cO}{\@s O}
\newcommand{\cP}{\@s P}
\newcommand{\cQ}{\@s Q}
\newcommand{\cR}{\@s R}
\newcommand{\cS}{\@s S}
\newcommand{\cT}{\@s T}
\newcommand{\cU}{\@s U}
\newcommand{\cV}{\@s V}
\newcommand{\cW}{\@s W}
\newcommand{\cX}{\@s X}
\newcommand{\cY}{\@s Y}
\newcommand{\cZ}{\@s Z}
\def\d{{\rm d}}
\def\i{{\rm i}}
\newcommand{\@bm}[1]{\ensuremath{\mathbf #1}}
\newcommand{\bma}{\@bm a}\newcommand{\bmA}{\@bm A}
\newcommand{\bmb}{\@bm b}\newcommand{\bmB}{\@bm B}
\newcommand{\bmc}{\@bm c}\newcommand{\bmC}{\@bm C}
\newcommand{\bmd}{\@bm d}\newcommand{\bmD}{\@bm D}
\newcommand{\bme}{\@bm e}
\newcommand{\bmf}{\@bm f}\newcommand{\bmF}{\@bm F}
\newcommand{\bmg}{\@bm g}\newcommand{\bmG}{\@bm G}
\newcommand{\bmh}{\@bm h}\newcommand{\bmH}{\@bm H}
\newcommand{\bmi}{\@bm i}\newcommand{\bmI}{\@bm I}
\newcommand{\bmj}{\@bm j}
\newcommand{\bmk}{\@bm k}\newcommand{\bmK}{\@bm K}
\newcommand{\bml}{\@bm l}
\newcommand{\bmm}{\@bm m}\newcommand{\bmM}{\@bm M}
\newcommand{\bmn}{\@bm n}
\newcommand{\bmo}{\@bm o}
\newcommand{\bmp}{\@bm p}
\newcommand{\bmq}{\@bm q}\newcommand{\bmQ}{\@bm Q}
\newcommand{\bmr}{\@bm r}
\newcommand{\bms}{\@bm s}\newcommand{\bmS}{\@bm S}
\newcommand{\bmt}{\@bm t}
\newcommand{\bmu}{\@bm u}\newcommand{\bmU}{\@bm U}
\newcommand{\bmw}{\@bm w}\newcommand{\bmW}{\@bm W}
\newcommand{\bmv}{\@bm v}\newcommand{\bmV}{\@bm V}
\newcommand{\bmx}{\@bm x}\newcommand{\bmX}{\@bm X}\newcommand{\bx}{\@bm x}
\newcommand{\bmy}{\@bm y}\newcommand{\bmY}{\@bm Y}\newcommand{\by}{\@bm y}
\newcommand{\bmz}{\@bm z}\newcommand{\bmZ}{\@bm Z}
\newcommand{\bmzero}{\@bm 0}
\newcommand{\@g}[1]{\ensuremath{\mathfrak #1}}
\newcommand{\gA}{\@g A}
\newcommand{\gD}{\@g D}
\newcommand{\gJ}{\@g J}
\newcommand{\gF}{\@g F}
\newcommand{\gM}{\@g M}
\newcommand{\gR}{\@g R}
\title[Convex set of PPT states]
{\large Convex set of quantum states \\
  with positive partial transpose \\ 
   analysed by hit and run algorithm}
\author{Konrad Szyma{\'n}ski}
\address{K.S: Institute of Physics, Jagiellonian University, Cracow, Poland}
\email{konrad.szymanski@uj.edu.pl}
\author{Beno\^\i{}t Collins}
\address{B.C.: Department of Mathematics, Graduate School of Science,
Kyoto University, Kyoto 606-8502, Japan
and
CNRS, 
France}
\email{collins@math.kyoto-u.ac.jp}
\author{Tomasz Szarek}
\address{T.S.: 
Faculty of Physics and Applied Mathematics, Gda{\'n}sk University of Technology,
ul. Gabriela Narutowicza 11/12, 80-233 Gda{\'n}sk, Poland
}
\email{szarek@intertele.pl}
\author{Karol {\.Z}yczkowski}
\address{K.{\.Z}: Institute of Physics, Jagiellonian University, Cracow, Poland
                                and
  Center for Theoretical Physics, Polish Academy of Sciences, Warsaw}
\email{ karol@tatry.if.uj.edu.pl}
\begin{document}


\begin{abstract}
The convex set of quantum states of a composite $K \times K$ system
with positive partial transpose is analysed. 
A version of the {\sl hit and run} algorithm 
is used to generate a sequence of random points covering this set uniformly
and an estimation for the convergence speed of the algorithm is derived.
For $K\ge 3$ this algorithm works faster than sampling over the entire
set of states and verifying whether the partial transpose is positive.
The level density of the PPT states is shown to differ from
the Marchenko-Pastur distribution, supported in $[0,4]$
and corresponding asymptotically to the entire set of quantum states. 
Based on the shifted semi--circle law, describing 
asymptotic level density of partially transposed states,
and on the level density for the Gaussian unitary ensemble
with constraints for the spectrum we find an explicit form
of the probability distribution supported in $[0,3]$, 
which describes well 
the level density obtained numerically for PPT states.
\end{abstract}

\date{March 28, 2017}

\maketitle

\section{Introduction}

The structure of the set of quantum states is a subject of a current interest
form the perspective of pure mathematics \cite{AS03} and theoretical physics \cite{ACH93,KZ01,GMK05,BZ06}.
A special attention is paid to the bipartite systems, in which separable and entangled states
can be distinguished \cite{KZ01} as entanglement plays a key role in the theory of quantum information processing \cite{H^4}.

It is well known that any separable state of a bipartite, $K \times K$ system
has positive partial transpose (PPT).
In the case of a two--qubit system, $K=2$, every PPT state is separable, while 
for higher systems it is not the case \cite{HHH96a}.
Description of the set of separable states for an arbitrary $K$ is difficult,
while it is straightforward for the set of PPT states.

Up till now, we are not aware of any dedicated algorithm to obtain a random PPT states
with respect to the flat (Hilbert--Schmidt) measure. 
It is not difficult to generate a random density matrix 
from the entire set $\Omega_N$ of quantum states of size $N$
using a matrix $G$ of the Ginibre ensemble of complex square 
random matrices of order $N$,
as the matrix  $\rho=GG^{\dagger} / {\rm Tr} GG^{\dagger}$
is positive and normalized and forms thus 
a legitimate random quantum state \cite{ZS01,ZPNC11}.
However, as the relative volume of the  subset of PPT states
decreases exponentially with the dimension \cite{ZHSL98,GB02,AS06},
is not efficient to generate random points in the entire set 
$\Omega_N$ of states of size $N=K^2$
and to check a posteriori, if the PPT property is satisfied.

The goal of this work is to analyse the set of PPT states by generating
a sequence of random points from this set according to the flat measure.
In this way we are in position to analyse the level density of 
random PPT states and to show the difference with respect of the 
Hilbert--Schmidt density, which corresponds to uniform sampling
of the entire set of quantum states and asymptotically tends to the 
Marchenko-Pastur distribution.
Note that analytical analysis of the set of PPT states is not an easy task,
as this set does not enjoy the symmetry with respect to unitary transformations.
However, based
on the fact that the level density of partially transposed
quantum states is asymptotically described by the shifted semi--circle law
\cite{Au12,BTL12} and making use of the level density of for the
Gaussian Unitary Ensemble (GUE) with constraints for the spectrum \cite{DM08},
we are in position to conjecture the level density which 
describes well  numerical results obtained by sampling 
the set of PPT states.

A parallel aim of the paper is to introduce an effective 
algorithm of generating random points in 
a convex high dimensional set and to analyse its features. We propose a variant of the 
{\sl hit and run} algorithm \cite{Lo93,Ve05,LV1,LV2} and establish its speed of convergence. 
In each iteration of the algorithm one selects a random direction by choosing a random point from a hypersphere. 
In the next step one finds the longest interval belonging to the set which
passes through the initial point and is parallel to this direction and 
jumps to a random point drawn from this interval.

Results obtained are easily applicable for all convex sets for which it is computationally 
efficient to find two points in which a given line enters and exists the set.
Hence this approach, suitable for the set of PPT states,
is not easy to apply for the set of separable states of higher dimensions, 
for which  the problem
of finding out whether a given state is separable is  `hard' \cite{Gu04}.

This paper is organised as follows.
In section \ref{section:two} we  present the algorithm and provide 
a general estimation of the convergence speed.
Usage of the algorithm for simple bodies in  ${\mathbb R}^d$
is illustrated in section 3.
In section 4 we discuss the set of quantum states of composite systems  
and its subset containing the states with positive partial transform.
Making use of the algorithm we generate sequences of random points
in this set and analyze numerical results for the level 
density for such states. Considerable deviations from the known
distributions for the entire set of states are reported
and it is conjectured that the asymptotic level density
for the PPT states is described by 
the distribution the (\ref{eq:sp}), 
different from the Marchenko--Pastur distribution \cite{MP67}.

\section{Hit and run algorithm and its convergence speed}
\label{section:two}

Consider a given convex, compact set   $X \subset {\mathbb R}^d$. 
To generate a sequence of random points distributed according 
to the uniform Lebesgue measure in $X$ we will use the following 
version of the 
``hit and run" algorithm \cite{Ve05,CKKSZ13}.

\begin{enumerate}
\item
 Choose an arbitrary starting point $x_0 \in X$,
\item
    Draw randomly a unit vector $e \in S^{d-1}$  according to the uniform  distribution on the hypersphere,
\item
Find boundary points  $x_1^{\rm min}(e), x_1^{\rm max}(e) \in \partial X$
along the direction $e$:
there exist $a,b\ge 0$ such that
$x_1^{\rm min}(e)=x_0-a e$ and $ x_1^{\rm max}(e)+be$.
\item
Select a point $x_1$  randomly with respect to the
    uniform measure in the interval $[x_1^{\rm min}, x_1^{\rm max}]$.
\item
Repeat the steps (ii)-(iv) to find subsequent random points
  $x_2,x_3,\dots$.
\end{enumerate}

Note that each time the direction $e$
is taken randomly and that the direction used in step $i+1$
does not depend on the direction used in step $i$.

A first result of this work consists in the following
estimation of the convergence speed in the total variation norm.
The total variation norm of two probability measures is  defined 
by the formula
$$
\|\mu_1-\mu_2\|_{TV}:={\hat\mu}^+(X)+\hat{\mu}^-(X),
$$
where $\hat{\mu}^+-\hat{\mu}^-$ is the Jordan decomposition of the signed measure $\mu_1-\mu_2$.

Let $\mathbf{\Phi}$ be a Markov chain \cite{Has}
described by the above algorithm and let $S$ be its transition kernel. 

\begin{theorem}
\label{convergence-2}
 Let $X\subset\mathbb R^d$ be a compact convex set. Assume that there exist $r,R >0$
such that for some $x_0 \in X$ one has
$B(x_0, r)\subset X\subset B(x_0, R)$.
Then the chain $\mathbf\Phi$
 satisfies the condition
 \begin{equation}\label{e211}
 \|S^n(x, \cdot)-\mathcal L_d\|_{TV}\le(1-\theta)^n\quad\text{for any $x\in X$},
 \end{equation} 
 where
\begin{equation}
\label{e21}
\theta=(2/d) [(R/r+1)^{d-1}(R/r)]^{-1}. 
\end{equation}
(Here $\mathcal L_d$ denotes the $d$--dimensional Lebesgue measure.)

\end{theorem}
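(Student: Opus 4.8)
The plan is to establish a Doeblin-type minorization condition for the transition kernel $S$: namely to show that there is a constant $\theta>0$ such that $S(x,A)\ge\theta\,\mathcal L_d(A)/\mathcal L_d(X)$ — or, more conveniently, $S(x,A)\ge\theta'\,\mathcal L_d(A\cap X)$ for a suitably normalized comparison — uniformly in $x\in X$ and over all Borel sets $A$. Once such a uniform lower bound on the one-step kernel is in hand, the geometric convergence \eqref{e211} in total variation follows from the standard coupling/Doeblin argument: writing $S(x,\cdot)=\theta\,\nu(\cdot)+(1-\theta)R(x,\cdot)$ for a probability measure $\nu$ and a sub-probability remainder, one couples two copies of the chain so that at each step they coalesce with probability at least $\theta$, giving $\|S^n(x,\cdot)-S^n(y,\cdot)\|_{TV}\le(1-\theta)^n$, and since $\mathcal L_d$ (normalized on $X$) is the stationary measure this yields the claimed bound against $\mathcal L_d$.

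The heart of the matter, then, is the explicit lower bound \eqref{e21} on $\theta$, which is a purely geometric computation. Starting from $x\in X$, the algorithm picks a uniform direction $e\in S^{d-1}$ and then a uniform point on the chord of $X$ through $x$ in direction $e$. I would compute the density of the resulting point $y=S(x,\cdot)$ at a target location $z\in X$ by the standard change of variables from "(direction, position along chord)" to Cartesian coordinates: the chord through $x$ and $z$ has some length $\ell(x,z)$, and passing to polar coordinates centered at $x$ contributes a Jacobian factor $\|z-x\|^{-(d-1)}$, so the one-step density is
\begin{equation*}
s(x,z)=\frac{1}{\sigma_{d-1}}\cdot\frac{1}{\ell(x,z)}\cdot\frac{1}{\|z-x\|^{d-1}},
\end{equation*}
where $\sigma_{d-1}=\mathrm{vol}(S^{d-1})$ and $\ell(x,z)$ is the length of the maximal chord of $X$ through $x$ and $z$. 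To bound this below I use $X\subset B(x_0,R)$: since both $x$ and $z$ lie in $X\subset B(x_0,R)$, we have $\|z-x\|\le 2R$ and also $\ell(x,z)\le 2R$ (any chord of $X$ has length at most $2R$). Hence $s(x,z)\ge (\sigma_{d-1})^{-1}(2R)^{-1}(\|z-x\|)^{-(d-1)}$. To convert this into a bound on $S(x,A)$ for $A\subset X$ one still has the awkward singular factor $\|z-x\|^{-(d-1)}$; this is actually helpful (it is large when $z$ is near $x$), but for a clean uniform bound one instead integrates only over a region where $\|z-x\|$ is controlled. Using $B(x_0,r)\subset X$, one observes that for any $x\in X$ and any $z\in B(x_0,r)$ the segment $[x,z]$ lies in $X$ and $\|z-x\|\le R+r$; so $s(x,z)\ge(\sigma_{d-1})^{-1}(2R)^{-1}(R+r)^{-(d-1)}$ for all $z\in B(x_0,r)$. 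Comparing volumes, $\mathcal L_d(B(x_0,r))=\kappa_d r^d$ and $\mathcal L_d(X)\le\mathcal L_d(B(x_0,R))=\kappa_d R^d$ with $\kappa_d=\sigma_{d-1}/d$, one gets for any Borel $A$,
\begin{equation*}
S(x,A)\ge\int_{A\cap B(x_0,r)}s(x,z)\,dz\ge\frac{1}{\sigma_{d-1}(2R)(R+r)^{d-1}}\,\mathcal L_d(A\cap B(x_0,r)),
\end{equation*}
and normalizing by $\mathcal L_d(X)\le\kappa_d R^d$ leads, after collecting the factors of $d$, $R$, $r$, to a Doeblin constant of the stated order $\theta=(2/d)\,[(R/r+1)^{d-1}(R/r)]^{-1}$. (A small amount of care with the exact normalization — whether one compares against $\mathcal L_d$ restricted to $X$ or against Lebesgue measure outright, and how the factor $(2R)^{-1}$ versus $(R+r)^{-1}$ is allocated — accounts for the precise form of the constant; the orders of magnitude match.)

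The main obstacle I anticipate is twofold. First, justifying the change-of-variables formula for $s(x,z)$ rigorously requires knowing that for Lebesgue-a.e.\ $x$ the chord structure of the convex body is well behaved — that the boundary is hit transversally, that $\ell(x,z)$ depends measurably on $(x,z)$, and that the map (direction, parameter) $\mapsto$ point is a diffeomorphism off a null set; convexity and compactness of $X$ make all of this true but it needs to be stated. Second, getting the constant to come out exactly as in \eqref{e21} rather than merely "of the same order" demands tracking the geometric inequalities tightly — in particular one wants the sharpest available upper bounds on $\|z-x\|$ and on the chord length $\ell(x,z)$ in terms of $R$ and $r$, and the cleanest is to restrict the target to the inscribed ball $B(x_0,r)$ as above, where every relevant length is at most $R+r$ and the chord is at least $2r$ long in the "worst" direction. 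Once the minorization is proved on $X$, the passage to the total variation bound against $\mathcal L_d$ is routine: $\mathcal L_d$ is invariant (the kernel is symmetric in an appropriate sense, or one simply notes that the uniform measure is the unique stationary measure of this irreducible aperiodic chain), so $\|S^n(x,\cdot)-\mathcal L_d\|_{TV}=\|S^n(x,\cdot)-\int S^n(y,\cdot)\,d\mathcal L_d(y)\|_{TV}\le\sup_y\|S^n(x,\cdot)-S^n(y,\cdot)\|_{TV}\le(1-\theta)^n$ by the coupling bound, which is \eqref{e211}.
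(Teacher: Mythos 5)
Your proposal follows essentially the same route as the paper's proof: compute the one-step density of the hit-and-run kernel in polar coordinates about $x$, bound it from below on the inscribed ball $B(x_0,r)$ using $\ell(x,z)\le 2R$ and $\|z-x\|\le R+r$, and conclude by a Doeblin minorization against the uniform probability measure on $B(x_0,r)$ together with invariance of $\mathcal L_d$. (Your remark that invariance follows from the symmetry $s(x,z)=s(z,x)$ of the density is correct and is in fact a shorter argument than the Fubini computation carried out in the paper; the measurability issues you worry about are harmless for a compact convex body.)

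The one concrete slip is a factor of $2$ in your density formula, and it matters for the constant. A target $z\neq x$ is reached both via the direction $e=(z-x)/\|z-x\|$ and via $-e$ (the same chord, traversed from the other end); summing the two contributions gives
\[
s(x,z)=\frac{2}{\sigma_{d-1}\,\ell(x,z)\,\|z-x\|^{d-1}},
\]
whereas your formula, lacking the $2$, integrates to $1/2$. (Check it on $X=B(x,R)$: there $\ell\equiv 2R$ and the radial law is $\mathbb{P}(\|z-x\|\le v)=v/R$, so the true density is $(\sigma_{d-1}R\|z-x\|^{d-1})^{-1}$, which is the paper's formula for the ball.) You should also take the minorizing measure to be the uniform \emph{probability} measure $\nu$ on $B(x_0,r)$ rather than normalize by $\mathcal L_d(X)$, which would cost spurious powers of $R/r$. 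With both corrections your chain of inequalities yields
\[
S(x,\cdot)\;\ge\;\frac{2\,b_d\,r^d}{\sigma_{d-1}\,(2R)\,(R+r)^{d-1}}\,\nu(\cdot)
\;=\;\frac{1}{d}\,\bigl[(R/r+1)^{d-1}(R/r)\bigr]^{-1}\,\nu(\cdot),
\]
since the ratio of the unit-ball volume to the unit-sphere area in $\mathbb R^d$ is $b_d/\sigma_{d-1}=1/d$. Note this is \emph{half} the stated $\theta$: the paper arrives at $2/d$ only through the identity ``$b_d/c_d=2/d$'', which is an arithmetic slip ($b_d/c_d=1/d$), and the sharp bounds $\ell\le 2R$, $\|z-x\|\le R+r$ leave no room to recover the missing $2$. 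So your instinct that the exact normalization needs care was right, but the residual factor of two is not a gap in your argument to be closed; what this approach (yours and the paper's alike) honestly proves is the convergence bound with $\theta=(1/d)[(R/r+1)^{d-1}(R/r)]^{-1}$, which is the same qualitative statement.
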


\begin{proof} At the very beginning of the proof we show that the Lebesgue measure $\mathcal L_d$ restricted to $X$ is invariant for the considered algorithm. To do this observe that for any $x\in X\setminus\partial X $ and a Borel set $A\subset X$ we have
$$
S(x, A)=\Upsilon\int_{S^{d-1}}\frac{1}{\|x^{\rm max}(e)-x^{\rm min}(e)\|}\int_{\mathbb R}{\bf
1}_A(x+\lambda e)\d \lambda \mathcal L_{d-1}(\d e),
$$
where 
$$
\Upsilon= (\mathcal L_{d-1}(S^{d-1}))^{-1}.
$$
Therefore, by the Fubini theorem we have
$$
\begin{aligned}
\int_X S(x, A)\mathcal L_d(\d x)&=\int_X \Upsilon\int_{S^{d-1}}\frac{1}{\|x^{\rm max}(e)-x^{\rm min}(e)\|}\int_{\mathbb R}{\bf
1}_A(x+\lambda e)\d \lambda \mathcal L_{d-1}(\d e)\mathcal L_d(\d x)\\
&=\Upsilon\int_{S^{d-1}}\frac{1}{\|x^{\rm max}(e)-x^{\rm min}(e)\|}\int_{\mathbb R}\int_X{\bf
1}_A(x+\lambda e)\mathcal L_d(\d x)\d \lambda \mathcal L_{d-1}(\d e)\\
&=\Upsilon\int_{S^{d-1}}\frac{1}{\|x^{\rm max}(e)-x^{\rm min}(e)\|}\int_{\mathbb R}\int_X{\bf
1}_{A-\lambda e}(x)\mathcal L_d(\d x)\d \lambda \mathcal L_{d-1}(\d e)\\
&=\Upsilon\int_{S^{d-1}}\frac{1}{\|x^{\rm max}(e)-x^{\rm min}(e)\|}\int_{\mathbb R}\mathcal L_d (A-\lambda e)\d \lambda \mathcal L_{d-1}(\d e)\\
&=\mathcal L_d (A)\Upsilon\int_{S^{d-1}}\frac{1}{\|x^{\rm max}(e)-x^{\rm min}(e)\|}\int_{\mathbb R}\d \lambda \mathcal L_{d-1}(\d e)=\mathcal L_d (A)\Upsilon \Upsilon^{-1}=\mathcal L_d (A),
\end{aligned}
$$
which finishes the proof that the Lebesgue measure $\mathcal L_d$ is invariant.

We are going to evaluate $S (x, \cdot)$ for any $x\in X$. First assume
 that $X=B(x, R)$. It is easy to see that the transition function $S(x, \cdot)$
 is absolutely continuous with respect to the Lebesgue measure $\mathcal L_d$. 
Its density is equal to
\begin{equation}\label{e1.5.6.13}
f_x(u)=(c_d \|u-x\|^{d-1} R)^{-1}\qquad\text{for $u\in B(x, R)\setminus \{x\}$,}
\end{equation}
where $c_d=2\pi^{d/2}/\Gamma(d/2)$ denotes the surface of the $d$-sphere of radius $1$.
Indeed, by symmetry argument we see that $f_x(u)=\tilde f (\|u-x\|)$ for some function $\tilde f: [0, R]\to \mathbb R$. Then we have
$$
\int_0^u \tilde f (v) c_d v^{d-1} \d v=u/R\qquad\text{for any $u\in [0, R]$.}
$$
Differentiating both sides with respect to $u$ we obtain (\ref{e1.5.6.13}).

Now let $X$ be arbitrary. Since $X\subset B(x_0, R)$ we see that for any $x\in X$ the transition function $S(x, \cdot)$ is absolutely continuous with respect to the Lebesgue measure $\mathcal L_d$ and its density $f_x$ satisfies
$$
f_x(u)\ge (c_d \|u-x\|^{d-1} 2R)^{-1}\quad\text{for $u\in X$.}
$$
Thus
$$
\begin{aligned}
S(x, \cdot)&\ge \int_{\cdot\cap B(x_0, r)} f_x(u)\mathcal L_d(d u)
\ge (c_d (R+r)^{d-1} 2R)^{-1} \mathcal L_d(B(x, r))\nu (\cdot)\\
&\ge (b_d/c_d) [(R/r+1)^{d-1}(R/r)]^{-1}\nu (\cdot),
\end{aligned}
$$
where $\nu(\cdot)=\mathcal L_d (\cdot\cap B(x_0, r))/\mathcal L_d (B(x_0, r))$
and $b_d=\pi^{d/2}/\Gamma(d/2+1)$
denotes the volume of a unit ball in $\mathbb R^d$. Since $b_d/c_d=2/d$, we finally obtain
$$
S(x, \cdot)\ge (2/d) [(R/r+1)^{d-1}(R/r)]^{-1}\nu(\cdot).
$$
Now the version of Doeblin's theorem finishes the proof -- see Proposition 3.1 in \cite{CKKSZ13} (see also \cite{Doeblin}).
\end{proof}

\section{Balls, cubes and simplices in $\mathbb R^d$.}

To show the convergence estimate (\ref{e211}) in action,
in this section we apply it for simple bodies in ${\mathbb R}^d$.

For cubes and balls in ${\mathbb R}^d$ it is
straightforward to generate random points according to the uniform measure,
so we will not advocate to use the above algorithm for this purpose.
However, it is illuminating to compare estimations
for the parameters determining the convergence rate
according to Eq. (\ref{e211}).

\medskip

 For a unit {\bf ball} $B^d$ both radii coincide, $R=r$,
so their ratio $\kappa =r/R$ is equal to unity.
Estimation (\ref{e21}) gives $\theta=\; 2^{-d}/d$
where $b_d$ denotes the volume of a unit $d$--ball.
This implies the convergence rate 
 $$\alpha=1-\theta=1- 2^{-d}/d.$$

For an unit {\bf cube} $C^d$ the inscribed radius $r=1/2$,
and outscribed radius $R=\frac{1}{2}\sqrt{d}$
so the ratio  reads $\kappa=r/R=1/\sqrt{d}$. 
 Hence the convergence rate reads,
 $$\alpha=1-\theta=1-(2/d)[(\sqrt{d}+1)^{d-1}\sqrt{d}]^{-1}.$$

To demonstrate usefulness of the algorithm 
we present in Fig. \ref{circ-squa}  the density of points 
it generated in a circle and a square.

The $\chi^2$ test of the data obtained numerically 
does not suggests to reject the  hypothesis that the data were generated according to the uniform distribution with a confidence level $p=0.999$. 
\begin{figure}
\centering
(a)\raisebox{-6.5cm}{\includegraphics[height=7cm]{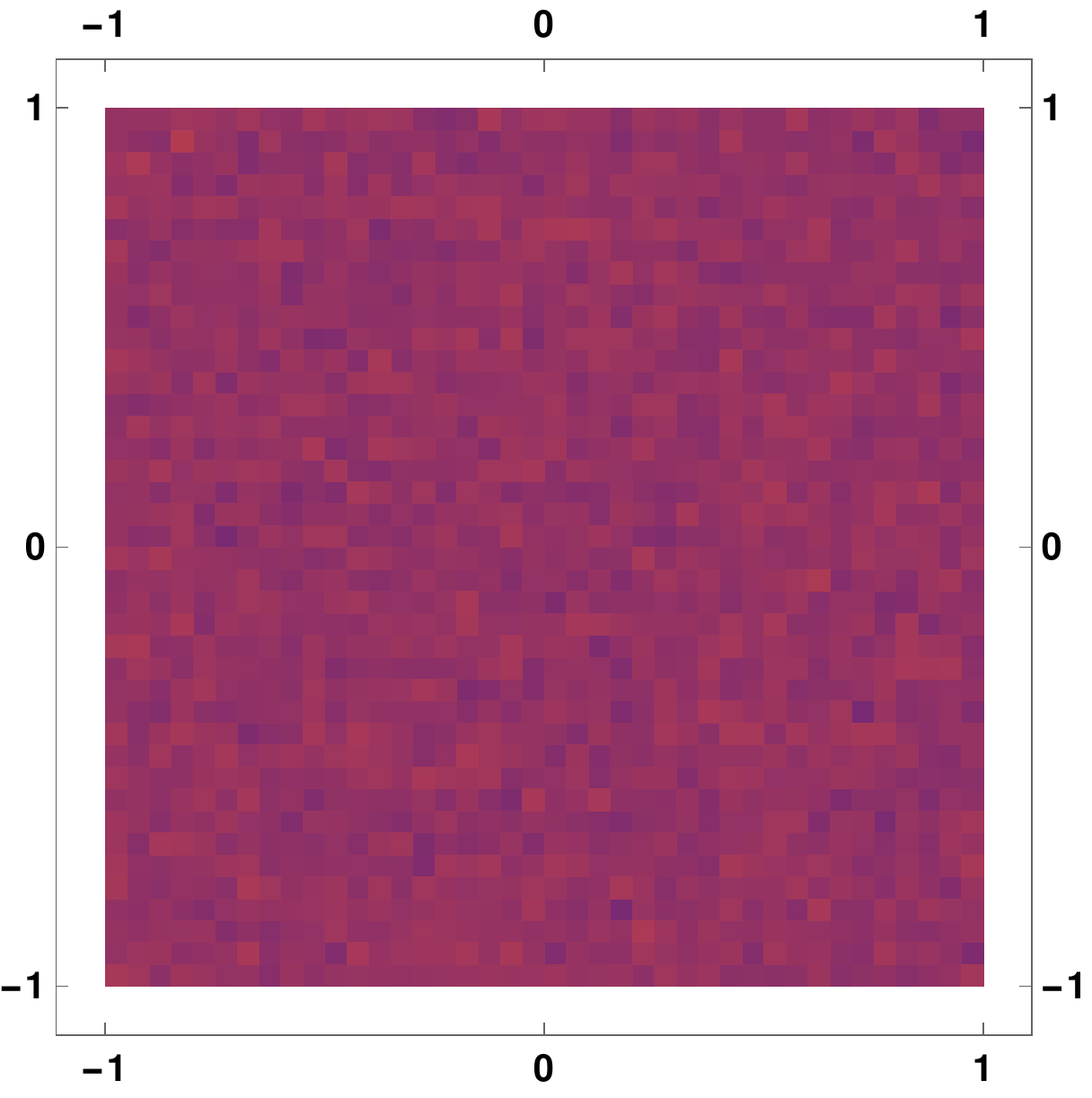}\includegraphics[height=7cm]{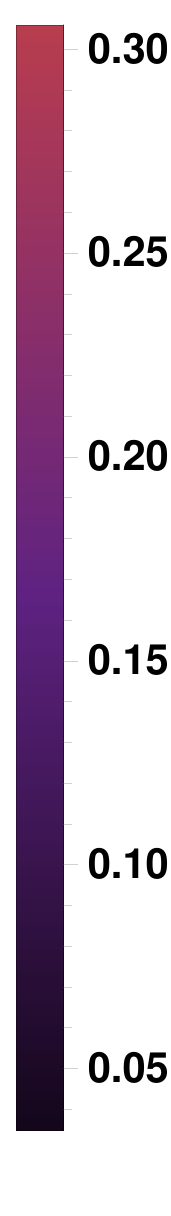}
}
(b)\raisebox{-6.5cm}{\includegraphics[height=7cm]{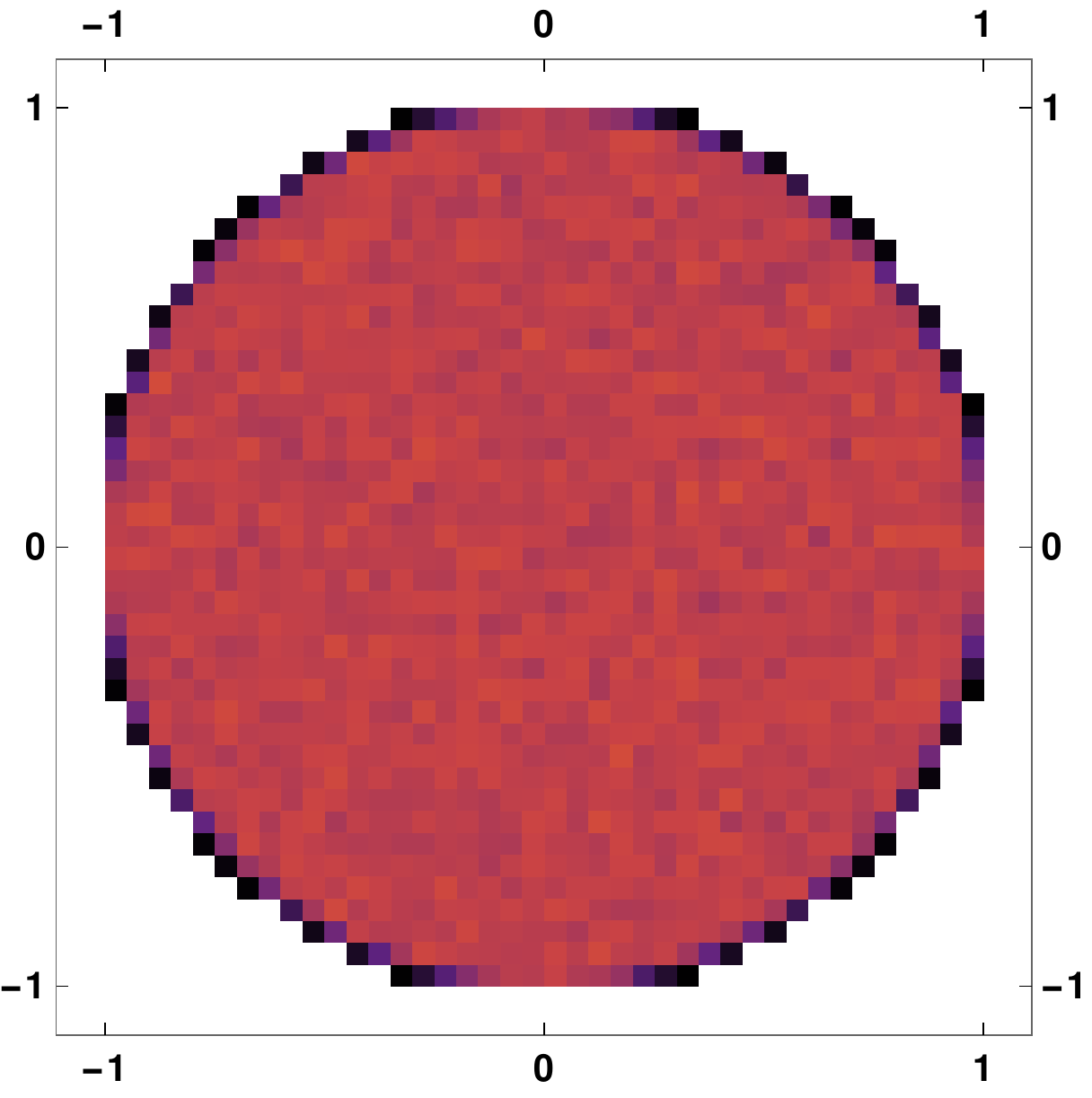}\includegraphics[height=7cm]{circle-desc.pdf}}
\caption{Histograms of sampled points in two simple 2--D cases: (a) square and (b) unit circle. For both plots the number of sampled points is $500 000$, the bins are squares of side $0.05$.
}
\label{circ-squa}
\end{figure}


For an $N$--{\bf simplex} $\Delta_{N}$ embedded in
${\mathbb R}^d$ with $d=N-1$ we have $R=\sqrt{(N-1)/N}$
and $r=1/\sqrt{N(N-1)}$ so that the ratio $\kappa=1/(N-1)=d^{-1}$.
In this case the convergence rate reads,
$$\alpha=1-\theta= 1-(2/d)[(d+1)^{d-1} d]^{-1}.$$

Note that the simplex $\Delta_{N}$ describes the set of classical states --
$N$-point probability distributions.

\section{Quantum states with positive partial transpose}

Let $\Omega_N$ be the set of density matrices of size $N$, 
formed by complex Hermitian and positive operators, $\rho^{\dagger}=\rho \ge 0$,
normalized by the trace condition Tr$\rho=1$.
Due to the normalization 
condition the dimension of the set is  $d=N^2-1$.

The radius of the out-sphere, equal to the Hilbert--Schmidt distance
between a pure state diag$(1,0,\dots,0)$
and the maximally mixed state $\rho_*={\mathbb I}/N$,
is $R=\sqrt{(N-1)/N}$.
The radius of the inscribed sphere given by the distance
between $\rho_*$ and the center of a face,
diag$(0,1,\dots,1)/(N-1)$ is equal to
$r=1/\sqrt{N(N-1)}$, so their ratio
 $\kappa=1/(N-1) =1/(\sqrt{d+1}-1) \sim d^{-1/2}$.
Hence the convergence rate of the algorithm reads in this case 
$$\alpha=1-\theta\sim 1-(2/d)[(\sqrt{d}+1)^{d-1}\sqrt{d}]^{-1}.$$

Consider now a composite dimension $N=K^2$, 
so the Hilbert space has a tensor product structure, 
${\cal H}_N={\cal H}_A\otimes {\cal H}_B$.
In this case one defines a partial transposition, 
${\mathbf T}_2={\mathbb I} \otimes {\mathbf T}$
and the set of PPT states (positive partial transpose)
which satisfy $\rho^{{\mathbf T}_2} \ge 0$. 
This set forms a convex subset of
$\Omega_N$ and it can be obtained as an intersection of
$\Omega_N$ and its reflection ${\mathbf T}_2(\Omega_N)$ \cite{BZ06}.
The set of PPT states can be decomposed into cones of the same height
$r=1/\sqrt{N(N-1)}$ \cite{SBZ06},
 hence $\mu=1/(N-1)\sim d^{-1/2}$.

In the case of set of PPT states the algorithm to generate
random elements from this set according to the flat measure
gives the same rate of convergence as in the case of quantum states,
$\alpha=1-\theta\sim 1-(2/d)[(\sqrt{d}+1)^{d-1}\sqrt{d}]^{-1}.$

\subsection{Two--qubit system}

   We start the discussion with the simplest composed
system consisting of two qubits, setting $K=2$
and $N=K^2=4$. To construct a random state 
from the entire $15$ dimensional set $\Omega_4$ 
we generated a random square Ginibre matrix $G$
of order four with independent complex Gaussian entries
and used the normalized Wishart matrix,
writing $\rho=GG^{\dagger} /{\rm Tr} GG^{\dagger}$.
In this way one obtains random states
generated according to the Hilbert--Schmidt measure in $\Omega_4$ \cite{ZS01,ZPNC11}.

Numerical data, obtained from $10^5$ density matrices and 
represented in Fig. \ref{two-qubits}a by closed boxes,
fit well to the analytical distribution $P_4(x)$
denoted in the plot by a solid curve.
Here  $x=N\lambda=4\lambda$ denotes the rescaled eigenvalue of $\rho$.
In general,  for any finite $N$, 
the distribution $P_N(x)$ can be represented
as a superposition of $2N$ polynomial terms,
\begin{equation}
P_N(x)=\sum_{m=2}^{2N} a_m (Nx)^{m-2} (1-Nx)^{N^2-m}  
\label{PN}
\end{equation}
with weights $a_m$ given by the Euler gamma function \cite{SZ04}.
Observe characteristic oscillations of the density
and four maxima of the distribution
related to the effect of level repulsion.

Given any bipartite quantum state $\rho$ 
it is straightforward to diagonalize the partial transpose,
 $\rho^{T_2}=({\mathbbm 1} \otimes T) \rho$,
 to check, whether this matrix is positive.
 In the case of two qubits the relative volume of the set of 
  states with positive partial transpose  (PPT)
 with respect to the HS measure is conjectured to be equal to 
 $8/33\approx 0.242$ \cite{SD15}. Thus verifying a posteriori the PPT 
 property we could divide the random states into two classes
 and obtain histograms of the level density for the sets 
of states with positive or negative partial transpose
(sets PPT and NPT, respectively) -- see. Fig. \ref{two-qubits}a.
 
 Furthermore, we studied the level density of the partially transposed 
 operators $\rho^{T_2}$ with eigenvalues $\chi$. 
For the states sampled from the entire set 
 $\Omega_4$, some partially transposed matrices are not positive, so the
 level density is supported also for negative values of the variable $y=4\chi$
 -- see. Fig. \ref{two-qubits}a. It is known that for $2 \times 2$ systems
   only a single eigenvalue $\rho^{T_2}$ can be negative 
   and it is not smaller than $-1/2$  \cite{STV98},
   so one sees a single peak of the density  $P(y)$ containing the 
   negative  eigenvalues.
 For large systems the asymptotic density for the transposed states is known
 to converge to the shifted semicircle \cite{Au12,BTL12}, 
 but no analytical expression for such a density in the case 
of $K=2$ is known so far.
 
 Figure \ref{two-qubits}b presents also histogram 
for the transposed states belonging to both classes of 
 PPT and NPT states. Note that for the set of
 PPT states the level density   $P(x)$ for the spectrum of $\rho$
 and $P(y)$ for the spectrum of $\rho^{T_2}$, are the same. 
This is a consequence of the fact that the
 partial transpose is a volume preserving involution, 
 so the level density will not be altered, if  statistics is restricted 
to PPT states only. 
Observe also that the spectrum corresponding to the PPT 
states of size $N=4$ do not display a singularity at the origin.

%

\begin{figure}
\centering
(a)\raisebox{-5cm}{\includegraphics[width=.45\textwidth]{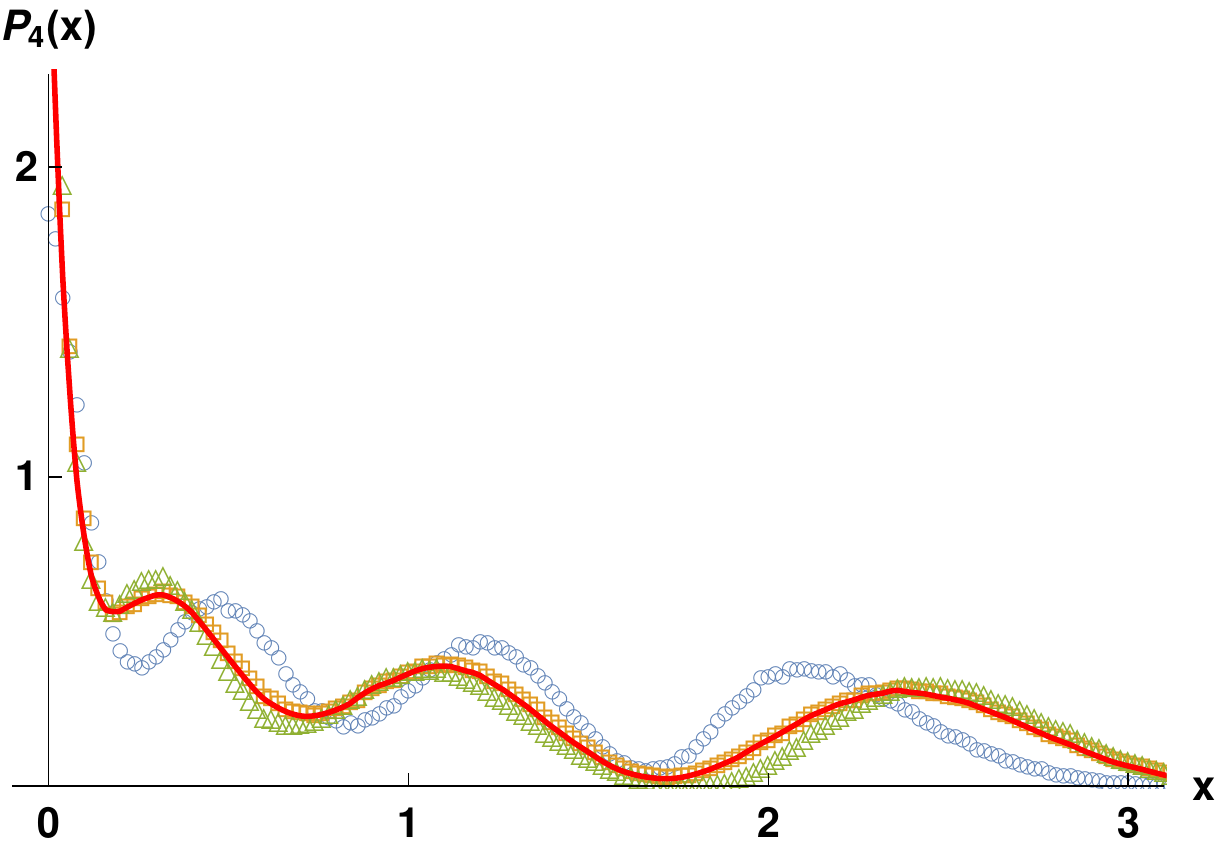}}
(b)\raisebox{-5cm}{\includegraphics[width=.45\textwidth]{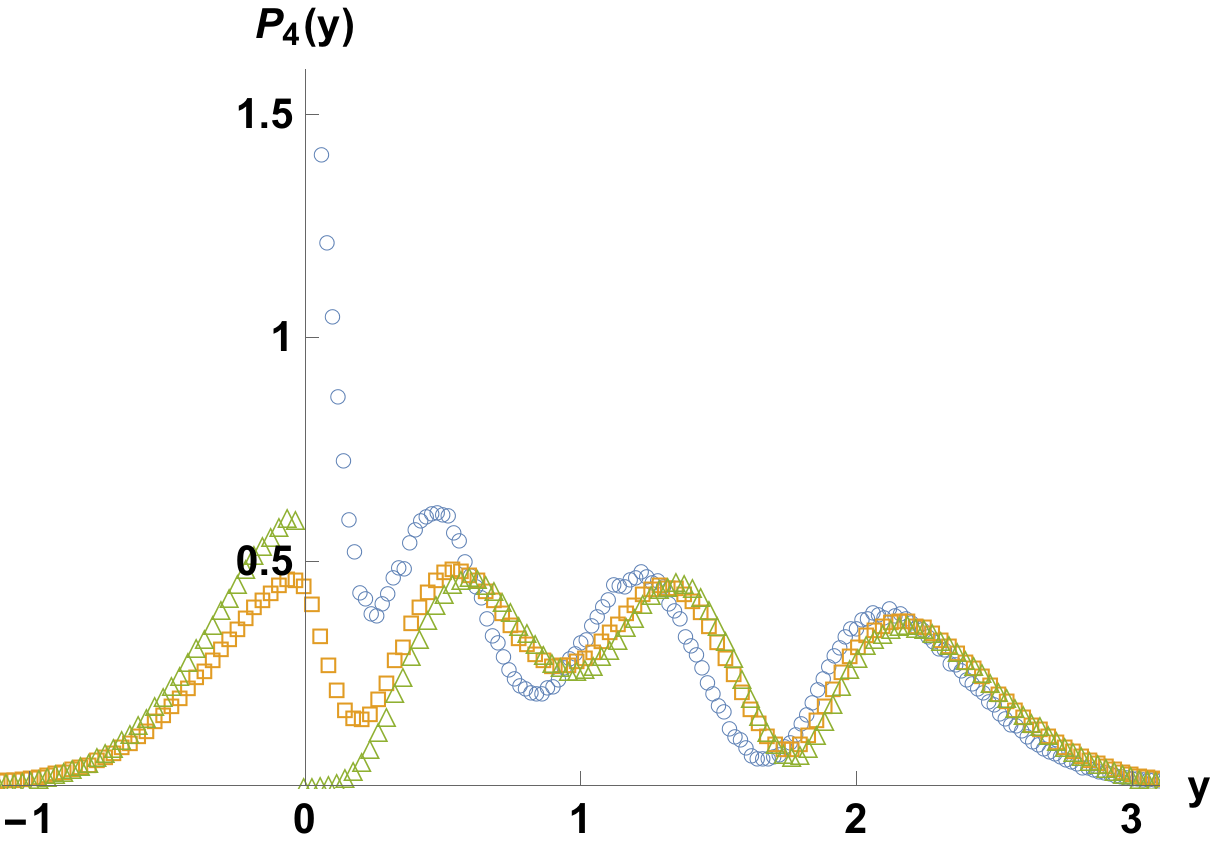}}
\caption{(a) --- Spectral density $P(x)$ for two--qubit density matrices $\rho$
where $x=4\lambda$:
 numerical data representing sampling $10^5$ times over entire set $\Omega_4$
(light/yellow $\Box$) compared with analytical distribution 
(solid, red curve).
This distribution differs from level density corresponding 
to the set of PPT states (blue $\bigcirc$), and the set of states with 
negative partial transpose states (green $\bigtriangleup$).
Panel (b) shows analogous data for partially transposed states, $\rho^{T_2}$:
(light/yellow $\Box$) denotes data for entire set $\Omega_4$,
(blue $\bigcirc$) denotes transposed PPT states, while
(green $\bigtriangleup$) denotes transposed states with 
negative partial transpose.}
\label{two-qubits}
\end{figure}

\subsection{Larger systems}

In the asymptotic case of large systems, $N >> 1$, 
the oscillations of the mean level density 
averaged over the entire set $\Omega_N$ of quantum states 
are smeared out and the distribution  (\ref{PN})
tends  to the Marchenko--Pastur distribution \cite{MP67},
\begin{equation}
\label{MPdist}
MP(x) =\frac{1}{2 \pi} \sqrt{\frac{4-x}{x}},
\end{equation} 
where the rescaled eigenvalue  $x=N \lambda$ belongs to the interval $[0,4]$. 
Although there exists pure quantum states with the operator 
norm equal to unity, $||\rho||=1$, a norm of a generic large dimension $N$ mixed state is asymptotically almost surely bounded from above by $4/N$,
since the support of the above distribution is bounded by $x_{\rm max}=4$.
Due to the effect of concentration of measure
in high dimensions a typical quantum state has spectral
distribution close to the Marchenko--Pastur law (\ref{MPdist}),
while states with different level densities become unlikely 
\cite{PPZ16}.

To demonstrate the quality of the hit and run algorithm,
we used it to generate samples of random states
for the entire convex set $\Omega_{N}$.
Diagonalising the generated states 
we obtained their eigenvalues $\lambda$,
and analyzed the level density $P(x)$ with
$x=N\lambda$. Histograms obtained in this way for $N=9$ and $N=25$
coincide with the analytical expression (\ref{PN}) --
see Fig. \ref{figTOT}.
Observe that the data for $N=25$ 
are already close to the Marchenko--Pastur distribution
valid asymptotically.
Although in this case the random points are generated
from the set $\Omega_{25}$ of dimension $d=N^2-1=624$, 
the average over a sample consisting of 
circa $10^5$ random points provides reliable data.

\begin{figure*}
    (a)\raisebox{-5cm}{ \includegraphics[width=0.45\textwidth]{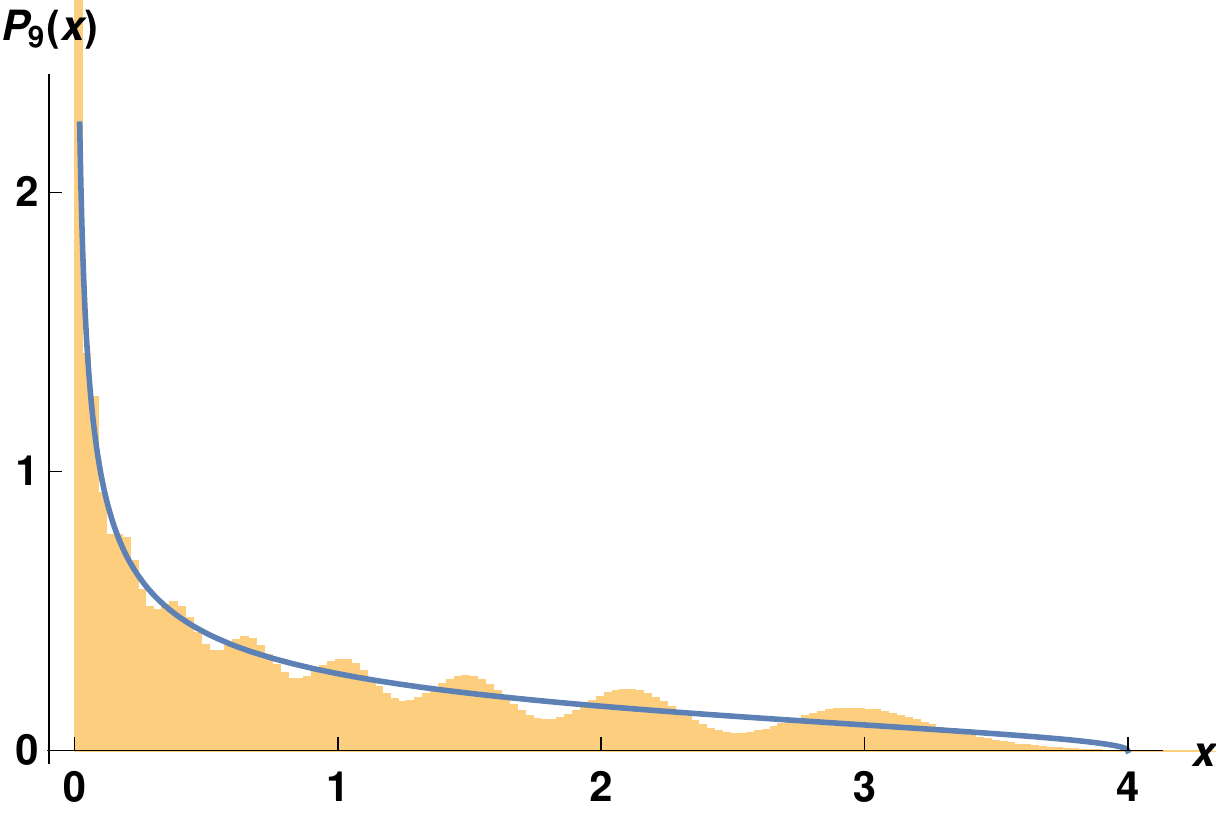}}
    (b)\raisebox{-5cm}{ \includegraphics[width=0.45\textwidth]{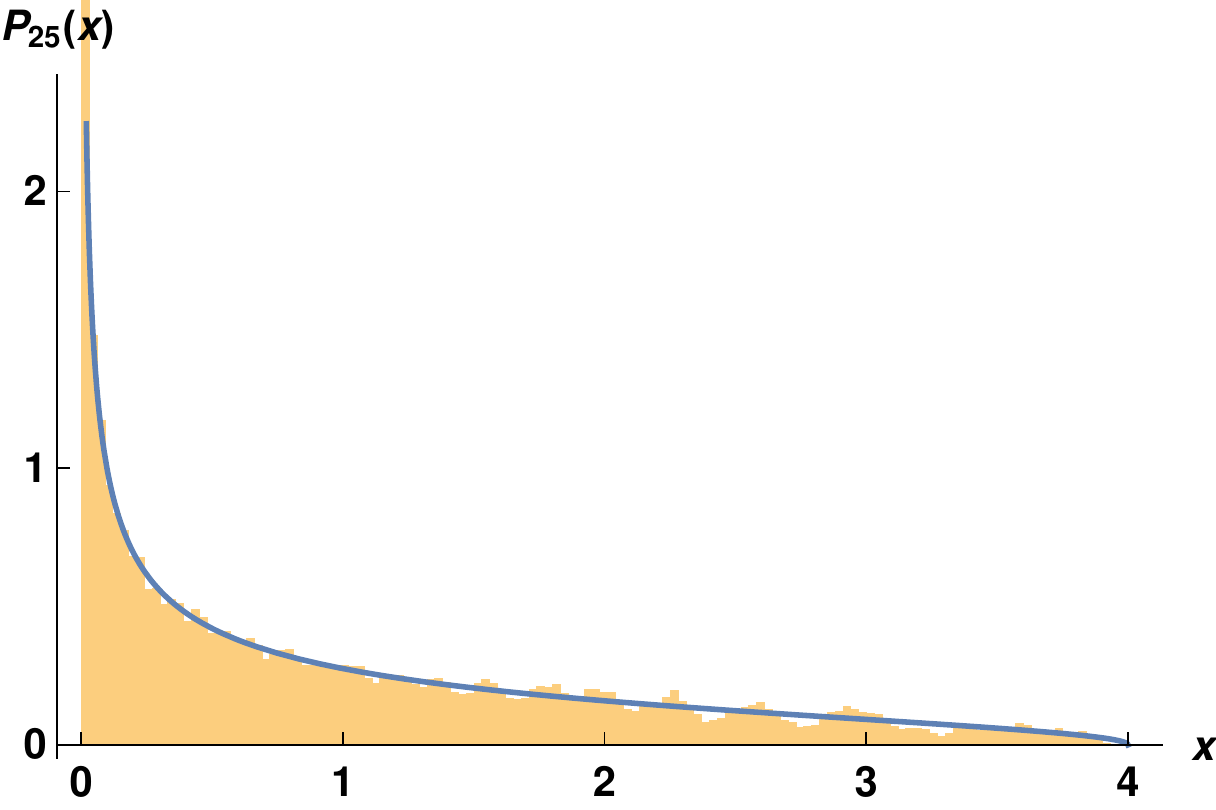}}
%
%
%
%
%
\caption{Spectral densities $P_N(x)$ averaged over entire 
set $\Omega_N$ of quantum states of size: 
a) $N=9$ and b) $N=25$.
Spectral density  (\ref{MPdist}) denoted by a solid (blue) 
line describes well the data already for $N=25$.
\label{figTOT}
}
\end{figure*}

\medskip

Consider now a bipartite case of size $N=K \times K$, for which
the notion of partial transpose is defined.
It is was first shown by Aubrun \cite{Au12}
that level density of the partially transposed bipartite random states,
written $\rho^{T_2}=({\mathbb I} \otimes T)\rho$,
asymptotically tends to the shifted semicircle law of Wigner,
\begin{equation}
P^T(x)  \sim \frac{1}{2 \pi } \sqrt{4 - (x - 1)^2} ,
\label{SSC}
\end{equation}
with support $x\in [-1,3]$.

Up to a linear shift, $x\to x-1$, 
this distribution is that of the ensemble of
random Hermitian matrices of GUE -- see e.g. \cite{Fo10}.

An extended model of random GUE matrices with spectrum of the rescaled
eigenvalue restricted to a certain interval $[0,L(z)]$ 
was analyzed by Dean and  Majumdar \cite{DM08}.
They derived an explicit family of  probability distributions,
labeled by a parameter $z$, which determines the position of the
`hard wall barrier' for the corresponding Coulomb gas model,
\begin{equation}
h_z(y)
=\frac{\sqrt{L(z)-y}}{2\pi\sqrt{y}}\left(L(z)+2y+2z\right),
\label{distDBa}
\end{equation} 
where the upper edge $L(z)$ of the support of $h_z(y)$ reads 
\begin{equation}
L(z)  =
\frac{2}{3}\left(\sqrt{z^{2}+6}-z\right)
.
\label{distL}
\end{equation}

Since partial transpose is a volume preserving involution,
the level density averaged over the set of PPT states
can be approximated by the density of states obtained 
from the shifted semicircle law of Aubrun (\ref{SSC}),
by imposing the restriction that all eigenvalues are non-negative.
The approximation consists in an assumption
that the ensemble of partially transposed density matrices
asymptotically coincides with the shifted GUE.

Taking distribution (\ref{distDBa}), setting $z=0$
and normalizing the rescaled variable $x=ay+b$  so that
the expectation value is set to unity as required,  
$\langle x \rangle = \int x g(x)=1$,
 we arrive at the normalized probability distribution
\begin{equation}
g(x)=
\frac{4}{27\pi}\sqrt{\frac{3-x}{x}}
\left(3+2x\right),
\label{eq:sp}
\end{equation}

supported in $[0,3]$.

To demonstrate that this distribution can describe
asymptotic level density of 
PPT states we need to rely on numerical computations.
In the case of a larger bi--partite system of size $N=K^2$
 the probability of finding a random PPT state decays exponentially with 
the dimension \cite{ZHSL98,GB02,AS06}.
Therefore the simplest approach to get a PPT state
by generating random states from the entire set  $\Omega_N$ 
and verifying, whether its partial transpose is positive becomes ineffective.

However, the `hit and run' algorithm, advocated in this work,
is still suitable for the case of the set of PPT states. 
For any two points inside the set it is easy to find
where the line joining them hits the boundary, provided the dimensionality $N=K^2$ is low enough. We have found the numerical procedure to be stable for $K\le5$.
Level density for the subset of PPT states for bipartite $K \times K$
system is shown in Fig. \ref{figPPT}. 
The number of random points generated, equal to 
$5\times 10^4$ for  $N=9$  and $4.4   \times 10^5$ for $N=25$, was found to give reliable results.
Characteristic finite--size oscillations visible
for $N=9$ become less pronounced for $N=25$.
As for this dimension the exact expression (\ref{PN}) 
is already close to the Marchenko--Pastur distribution, 
our data obtained for the $5 \times 5$ system
support the conjecture that the level density for the PPT states is 
asymptotically described by the distribution (\ref{eq:sp}).


\begin{figure*}
    (a) \raisebox{-5cm}{\includegraphics[width=0.45\textwidth]{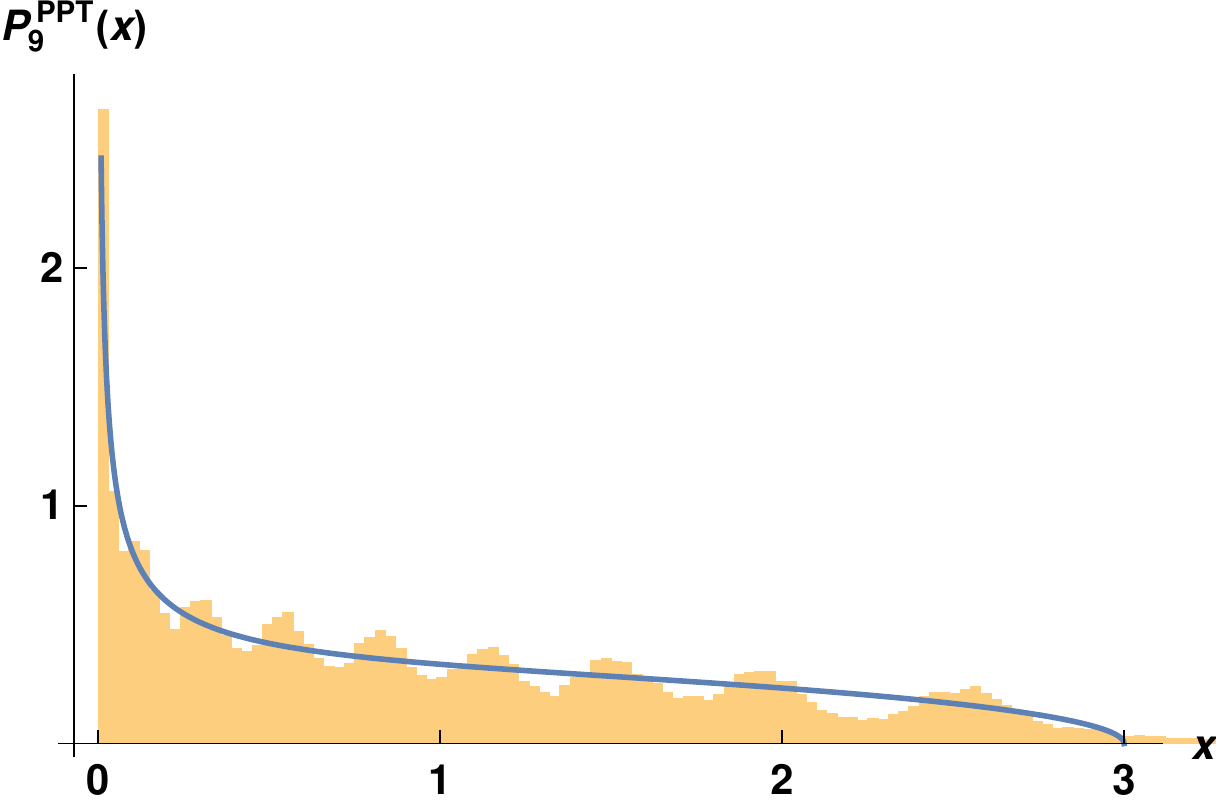}}
    (b)\raisebox{-5cm}{\includegraphics[width=0.45\textwidth]{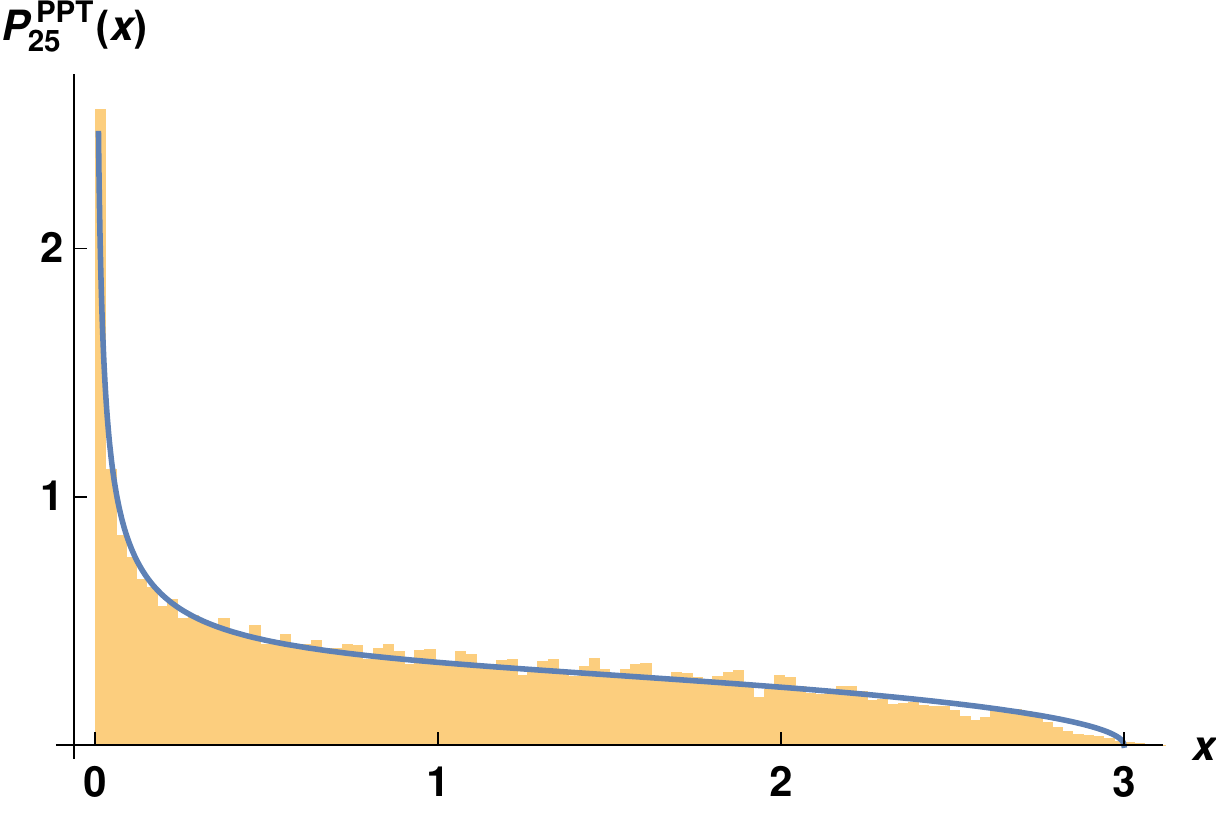}}
%
%
%
%
%
\caption{Spectral densities $P_N^{PPT}(x)$ for PPT states of bipartite systems of size: 
a) $N=3\times 3=9$ and b) $N=5 \times 5=25$. 
 Spectral density (Eq. (\ref{eq:sp})) denoted by a solid (blue) 
line describes well the data  for  $N=25$.
\label{figPPT}
}
\end{figure*}

Observe that the right edge of the support of $P_N^{\rm PPT}(x)$
is smaller than the upper bound for the MP distribution, $x_{max}=4$.
Hence our numerical results support the conjecture that the
operator norm of the PPT states is asymptotically almost surely 
smaller than the  norm of a generic state
 taken from the entire set $\Omega_{N}$,
which typically behaves as $||\rho|| \approx_{a.s.} 4/N$.

\bigskip

In view of these numerical results, it is natural to conjecture that
as $N\to\infty$, the spectral distribution of 
$\rho_{\rm PPT}$ is different from the distribution obtained
from an uniform sampling on all states. 
To prove such a statement 
one could show that there exists $\varepsilon >0$ such that
$||\rho_{\rm PPT}|| \le_{a.s.} C_{PPT}/N \leq (4-\varepsilon )/N$,
as this would result in differentiating mathematically the 
uniform distribution on all states from the uniform distribution 
on PPT states. 
However, despite our efforts, we were
 not able to prove this result and we leave it as an open question. 

It is slightly easier to deal with these sets of quantum states,
which are invariant with respect to unitary transformations.
There is a natural notion of $APPT$ (absolute positive partial transpose), which means that a state satisfies
 the $PPT$ property regardless of its (global) unitary evolution. 
In the case of $2\times 2$ systems this property is
equivalent to absolute separability \cite{KZ01},
i.e. separability with respect to any choice
of a two dimensional subspace embedded in ${\cal H}_4$,
which defines both subsystems.
Hence the property of APPT of a given state cannot 
depend on its eigenvectors but is only a function
of its eigenvalues \cite{VADM01}.
This feature holds also for higher dimensions 
and in some cases the boundary of this set are known \cite{Hi07,Jo13}.
Interestingly, although for higher dimensions  the set of PPT
states has much larger volume then the set of separable states
\cite{SWZ08}, it is conjectured that the set of absolutely
separable states and APPT states do coincide \cite{AJR15}.
Furthermore, Proposition 8.2 from \cite{JLNR15} concerning the 
largest eigenvalue of a quantum state belonging to the set of APPT states implies
that the support of the level density 
for the states of this set is asymptotically bounded by $x=3$.
Observe that this fact is consistent with our observations
concerning the generic behavior of the norm of a PPT state.

Note that the above questions of separating distributions through their typical largest eigenvalues 
can be partly addressed in the more general context of sampling from a uniform purification. 
It is  known \cite{ZS01,ZPNC11}
that if the ancilla space has the same dimension as the state space, 
this sampling method gives the uniform distribution. 
It is also interesting to study other regimes, where the ancilla space is bigger than the state space.  Aubrun 
proved in \cite{Au12} that the PPT property holds with probability $1$ as 
$N\to\infty$ if, for $\varepsilon >0$ the ancilla space
is of dimension at least $(4+\varepsilon) N$. Intuitively, a bigger ancilla space means that the sampling is more concentrated
around the maximally mixed state. 
For even larger ancillas 
the generated states belong to the set of APPT states,  
as it was showed in \cite{CNY12} that the threshold is $(4+\varepsilon) N^2$ 
and that this is essentially optimal.
Unsurprisingly, this shows that the set of APPT states is much smaller than the set of PPT states. 
But the problem of comparing these sampling probabilities with the uniform PPT distribution
or the uniform APPT distribution remains difficult to achieve formally.

\section{Concluding Remarks}\label{section:remarks}

In this paper we analysed a universal algorithm to generate random points
inside an arbitrary compact set $X$ in ${\mathbb R}^d$
according to the uniform measure.
Any initial probability measure $\mu$ transformed
by the corresponding Markov operator converges
exponentially to the invariant measure $\mu_*$,
uniform in $X$. Explicit estimations for the convergence rate are derived
in terms of the ratio $\kappa=r/R$ between the radii of the sphere
inscribed inside $X$ and the sphere outscribed on it.

Thus the algorithm presented here can be used in practice
to generate, for instance, a sample
of random quantum states. In the case of
states of a composed quantum system, one can also
generate a sequence of random states with positive partial
transpose. Sampling random states satisfying a given condition
and analyzing their statistical properties is relevant
in the research on quantum entanglement and
correlations in multi-partite quantum systems.
A standard approach of generating random points
from the entire set of quantum states with respect to the flat measure \cite{ZS01}
and checking a posteriori, whether the partial transpose of the state constructed
is positive, becomes inefficient for large dimensions,
as the relative volume of the set of PPT states becomes
exponentially small \cite{ZHSL98}.

Obtained numerical results show that the
level density for random states covering uniformly
the subset of the PPT states
differs considerably from the Hilbert-Schmidt level density 
corresponding to the entire set  $\Omega_{K^2}$
of mixed states of a bipartite system.
Making use of the shifted semicircle law of Aubrun (\ref{SSC}),
which describes the asymptotic density of partially transposed states,
and imposing the restriction that all the eigenvalues 
are non-negative we arrived at
the probability distribution (\ref{eq:sp}).
This distribution was compared  with the level density 
obtained numerically with the `hit and run' algorithm 
applied for the set of PPT states for the $K \times K$ system
for $K=3,4,5$. The larger dimension
the better agreement of the numerical data
with the distribution (\ref{eq:sp}),
so is tempting to conjecture that it 
describes the level density of the PPT states in the asymptotic limit.
%

\medskip

\section*{Acknowledgements}

We are obliged to Uday Bhosale and
Arul Lakshminarayan for their interest
in the project and several fruitful discussions.
It is a pleasure to thank Nathaniel Johnston
and Ion Nechita for constructive remarks and for bringing
to our attention Ref. \cite{DM08}.
The research of TS was supported by 
the National Science Center of Poland, grant number DEC- 2012/07/B/ST1/03320 and EU grant RAQUEL,
while K{\.Z} acknowledges a support by the NCN grant DEC-2011/02/A/ST1/00119.
BC was supported by JSPS Kakenhi grants number 26800048 and 15KK0162,  and  the grant number ANR-14-CE25-0003.

\end{document}